\let\oldnl\nl
\newcommand{\nonl}{\renewcommand{\nl}{\let\nl\oldnl}}
\def\TitleOfAlgo{\@ifnextchar({\@TitleOfAlgoAndComment}{\@TitleOfAlgoNoComment}}
\def\@TitleOfAlgoAndComment(#1)#2{\nonl\hspace*{-1.5em}#2 #1\;}
\def\@TitleOfAlgoNoComment#1{\nonl\hspace*{-1.5em}#1\;}
\newcommand*\patchAmsMathEnvironmentForLineno[1]{
  \expandafter\let\csname old#1\expandafter\endcsname\csname #1\endcsname
  \expandafter\let\csname oldend#1\expandafter\endcsname\csname end#1\endcsname
  \renewenvironment{#1}
  {\linenomath\csname old#1\endcsname}
  {\csname oldend#1\endcsname\endlinenomath}}
  \newcommand*\patchBothAmsMathEnvironmentsForLineno[1]{
  \patchAmsMathEnvironmentForLineno{#1}
  \patchAmsMathEnvironmentForLineno{#1*}}
\newtheorem{theorem}{Theorem}
\newtheorem{corollary}{Corollary}
\newcommand{\etal}{{et~al.}}
\newcommand{\ie}{{i.e.}}
\newcommand{\eg}{{e.g.}}
\newcommand{\opt}{\textsf{OPT}}
\newcommand{\RR}{\mathbb{R}} 
\newcommand{\eps}{\varepsilon}
\def\Prob{{\rm Prob}}
\def\E{{\rm E}}
\def\Var{{\rm Var}}
\def\Cov{{\rm Cov}}
\newcommand{\later}[1]{}
\newcommand{\old}[1]{}
\title{\textsc{Finding Triangles or Independent Sets;\\
and Other Dual Pair Approximations}}
\author{
Adrian Dumitrescu\thanks{%
Algoresearch L.L.C., Milwaukee, WI, USA. 
Email~\texttt{ad.dumitrescu@algoresearch.org}}}
\begin{document}

\maketitle

\begin{abstract}
  We revisit the algorithmic problem of finding a triangle in a graph (\textsc{Triangle Detection}),
  and examine its relation to other problems such as \textsc{3Sum}, \textsc{Independent Set},
  and  \textsc{Graph Coloring}.
We obtain several new algorithms:

\smallskip
(I) A simple randomized algorithm for finding a triangle in a graph.
As an application, we study the range of a conjecture of P\v{a}tra\c{s}cu (2010) regarding
the triangle detection problem.

\smallskip
(II) An algorithm which given a graph $G=(V,E)$ performs one of the
following tasks in $O(m+n)$ (\ie, linear) time:
(i)~compute a $\Omega(1/\sqrt{n})$-approximation of a maximum independent set in $G$ or
(ii)~find a triangle in $G$.
The run-time is faster than that for any previous method for each of these tasks.

\smallskip
(III) An algorithm which given a graph $G=(V,E)$ performs one of the
following tasks in $O(m+n^{3/2})$ time:
(i)~compute an $\sqrt{n}$-approximation for \textsc{Graph Coloring} of $G$ or
(ii)~find a triangle in $G$.
The run-time is faster than that for any previous method for each of these tasks
on dense graphs, with $m =\omega(n^{9/8})$. 

\smallskip
(IV) The second and third results suggest the following broader research direction:
if it is difficult to find (A) or (B) separately, can one find one of the two efficiently?
This motivates the \emph{dual pair} concept we introduce.
We discuss and provide several instances of dual-pair approximation.

\smallskip
\textbf{\small Keywords}: triangle detection problem, matrix multiplication, approximation algorithm,
dual pair.

\end{abstract}

\section{Introduction} \label{sec:intro}

Consider the problem of deciding whether a given graph $G=(V,E)$ contains a complete subgraph
on $k$ vertices. If the subgraph size $k$ is part of the input, then the problem is $\NP$-complete
(the well-known \textsc{Clique} problem). 
Let $|V|=n, |E|=m$. For every fixed $k$, determining whether a given graph $G=(V,E)$
contains a complete subgraph on $k$ vertices
can be accomplished by a brute-force algorithm running in $O(n^k)$ time.

For $k=3$, deciding whether a graph contains a triangle and finding one if it does
(or counting all triangles in a graph)
can be done in $O(n^\omega)$ time by the algorithm of Itai and Rodeh~\cite{IR78}:
the algorithm computes $M^2$, where $M$ is the graph adjacency matrix.
The existence of entries $ij$ where $M_{ij}=1$ and $M^2_{ij} \geq 1$ indicates the
presence of triangles(s) with edge $ij$. If at least one pair $i,j$ satisfies this condition, then
$G$ contains a triangle. See~\cite[Ch.~10]{Mat10} for a short exposition of this elegant method.
Alternatively, this task can be done in $O(m^{2\omega/(\omega+1)}) =O(m^{1.41})$ time
by the algorithm of Alon, Yuster, and Zwick~\cite{AYZ97}.
For $k=4$, deciding whether a graph contains a $K_4$ and finding one if it does
(or counting all $K_4$'s in a graph) can be done in $O(n^{\omega+1})$ time by the algorithm
of Alon, Yuster, and Zwick~\cite{AYZ97}, and in $O(m^{(\omega+1)/2})=O(m^{1.69})$ time
by the algorithm of Kloks, Kratsch, and M{\"{u}}ller~\cite{KKM00}.

In contrast to the problem of detecting the existence of subgraphs of a certain kind,
the analogous  problem of listing \emph{all} such subgraphs has usually higher complexity.
For example, finding all triangles in a given graph (each triangle appears in the output list)
can be accomplished in $O(m^{3/2})$ time and with $O(m)$ space 
(Itai and Rodeh~\cite{IR78}, Bar-Yehuda and Even~\cite{BE82}).
Chiba and Nishizeki~\cite{CN85} refined the time complexity
in terms of graph arboricity (the minimum number of edge-disjoint
forests into which its edges can be partitioned); their algorithm
lists all triangles in a graph in $O(m \alpha)$ time, where $\alpha$ is the arboricity.
Since there are graphs $G$ with
$\alpha(G)= \Theta(m^{1/2})$, this does not improve the worst-case dependence on $m$
(which, in fact, cannot be improved).
More general, for every fixed $\ell \geq 3$, Chiba and Nishizeki gave an algorithm for
listing all copies of $K_\ell$ in $O(\alpha^{\ell-2} \cdot m)$ time. 

The following variants of the problem of finding triangles in a given undirected graph $G$
can be distinguished:
(i)~the triangle \emph{detection} (or \emph{finding}) problem is that of
  finding a triangle in $G$ or reporting that none exists;
(ii)~the triangle \emph{counting} problem is that of determining the total number 
  of triangles in $G$;
(iii)~the triangle \emph{listing} problem is that of listing all 
triangles in $G$, with each triangle appearing in the output list. 
Any algorithm for listing all triangles can be easily transformed into one for
triangle detection or into one for listing a specified number of triangles (as
called for in Theorem~\ref{thm:patrascu}). 

Our initial motivation in this paper was a conjecture of P\v{a}tra\c{s}cu~\cite{Pa10} regarding triangle detection.
Specifically, he asked whether an $\Omega(m^{4/3-o(1)})$ lower bound holds for this task.

\paragraph{Our results.}

\begin{enumerate} [(i)] \itemsep 1pt

\item Let $0 < \eps \leq 0.5$ and $0<\delta \leq 0.25$ be positive constants. 
Given a graph $G=(V,E)$ with $n$ vertices, $m =\Omega(n^{1+3.82\delta})$ edges,
and $t =\Omega(m^{1+\delta})$ triangles, there is a randomized algorithm that
finds a triangle with high probability in $O\left( n^{\omega(1-\delta)} \right)$ or 
$O \left( \left( m n^{-2\delta} \right)^{\frac{2\omega}{\omega+1}} \right)$
expected time   (Theorem~\ref{thm:detection} in Section~\ref{sec:intro}).
So if an $\Omega(m^{4/3-o(1)})$-lower bound for triangle detection inquired by P\v{a}tra\c{s}cu~\cite{Pa10}
were to hold, then its validity is restricted to a certain range of the graph parameters $m$ and $t$;
in particular, $m$ and $t$ cannot be too large.

\item Let $G=(V,E)$ be a graph with $n$ vertices and $m$ edges.
  We consider data structures for answering independent set queries of the form: Given a subset
  $U \subseteq V$, is $U$ independent?
  We give a simple implementation of a candidate data structure. 

\item Given a graph $G=(V,E)$ with $n$ vertices and $m$ edges,
  an independent set of size $\lceil 2m/n \rceil$          
  or a triangle can be found in $O(m +n)$ time
  (Theorem~\ref{thm:t-or-is} in Section~\ref{sec:t-or-is}).
  Hence, given a graph $G=(V,E)$, one of the following tasks can be performed
  in  $O(m+n)$ time: (i)~compute an $\Omega(1/\sqrt{n})$-approximation of a maximum independent
  set in $G$, or (ii)~find a triangle in $G$.

\item Given a graph $G=(V,E)$ with $n$ vertices and $m$ edges,
  a $\sqrt{n}$-approximation  for \textsc{Graph Coloring} of $G$ or a triangle in $G$
  can be found in $O(m+n^{3/2})$ time
    (Theorem~\ref{thm:t-or-chi} in Section~\ref{sec:t-or-chi}).

\item We discuss and provide several instances of dual-pair approximations besides
  items (iii) and (iv) above:
  Theorems~\ref{thm:eppstein1} and~\ref{thm:eppstein2} in Section~\ref{sec:dual}
  are derived from results of Eppstein~\cite{Epp10}, and Karpinski and Schmied~\cite{KS12}. 
  
\end{enumerate}

\subsection{Preliminaries}

\paragraph{Definitions and notations.} 
Let $G=(V,E)$ be an undirected graph. The \emph{neighborhood} of a vertex $v \in V$ is
the set $N(v) =\{w \ : \  (v,w) \in E\}$ of all adjacent vertices, and its cardinality 
$\deg(v)=|N(v)|$ is called the \emph{degree} of $v$ in $G$.  

A \emph{clique} in a graph $G=(V,E)$ is a subset $C \subseteq V$ of vertices,
each pair of which is connected by an edge in $E$. The \textsc{Clique}
problem is to find a clique of maximum size in $G$. 
An \emph{independent set} of a graph $G=(V,E)$ is a subset $I \subseteq V$ of vertices
such that no two of them are adjacent in $G$. The \textsc{Independent-Set}
problem is to find a maximum-size independent set in $G$. 

Let $\E[X]$ and $\Var[X]$ denote the \emph{expectation} and respectively,
the \emph{variance}, of a random variable~$X$.
If $X$ and $Y$ are random variables,
$\Cov(X,Y) =\E[X Y] - \E[X] \cdot \E[Y]$ is their \emph{covariance}. 
If $E$ is an event in a probability space, $\Prob(E)$ denotes its
probability. Chebyshev's inequality (see for instance~\cite[p.~49]{MU17}) 
is the following: For any $a>0$,
\begin{equation} \label{eq:C}
 \Prob\left(|X -\E[X]| \geq a \right) \leq \frac{\Var[X]}{a^2}.
\end{equation}

Unless specified otherwise, all logarithms are in base~$2$. 

\paragraph{\textsc{3Sum} and triangle detection.}
The \textsc{3Sum} problem is to decide, given a set $A \subset \RR$ of size $n$, whether
there exist $a,b,c \in A$ such that $a + b + c =0$. It was originally conjectured that 
 \textsc{3Sum} requires $\Omega(n^2)$ time on the Real RAM~\cite{BDP08,GP18}.
Gr{\o}nlund and Pettie~\cite{GP18} gave the first algorithms for \textsc{3Sum}
running in subquadratic time and thereby disproved this conjecture.
Specifically, they gave a deterministic algorithm that runs in
$O\left(n^2 (\log \log{n}/\log{n})^{2/3} \right)$ time,
and a randomized algorithm that runs in
$O\left(n^2 (\log \log{n})^2/\log{n} \right)$ expected time.
Gold and Sharir~\cite{GS17} provided a slightly faster deterministic algorithm running in
$O\left(n^2 \log \log{n}/\log{n} \right)$ time.
The revised conjecture is that \textsc{3Sum} requires $\Omega(n^{2-o(1)})$ time
on the Real RAM~\cite{GP18}. Obtaining a specified time bound for a problem is
said to be ``\textsc{3Sum}-hard'' if doing so would violate the revised conjecture above. 

The \textsc{3Sum} problem and its variants are quite relevant in the field of algorithm
complexity. Although the \textsc{3Sum} problem itself does not seem to have practical
applications, it has enjoyed wide interest due to numerous problems that can be reduced
from it.
Thus, lower bounds on \textsc{3Sum} imply lower bounds on several problems
in computational geometry, dynamic graph algorithms, triangle listing,
and others; see~\cite{GO12,GP18} for an enumeration of such problems. 

Recall that testing whether a graph contains a triangle can be done in
$O(m^{2\omega/(\omega+1)}) =O(m^{1.41})$ time by the algorithm of Alon, Yuster,
and Zwick~\cite{AYZ97};
and so if $\omega=2$ this task can be accomplished in $O(m^{4/3})$ time.
Exploiting the connection with \textsc{3Sum}, P\v{a}tra\c{s}cu obtained the
following conditional lower bound (recall the meaning of ``\textsc{3Sum}-hard''):

\begin{theorem} {\rm \cite{Pa10}} \label{thm:patrascu}
  In a graph with $m$ edges, listing $m$ triangles in $O(m^{4/3-\eps})$ time is
  \textsc{3Sum}-hard.
\end{theorem}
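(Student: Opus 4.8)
The plan is to argue in contrapositive form through a chain of reductions ending at \textsc{3Sum}: if an $m$-edge graph allowed one to report $m$ of its triangles in $O(m^{4/3-\eps})$ time, then \textsc{3Sum} on $n$ numbers could be solved in $O(n^{2-\eps'})$ time for some constant $\eps'>0$, violating the revised \textsc{3Sum} conjecture quoted above, which is exactly what ``\textsc{3Sum}-hard'' means. Since a standard hashing step reduces \textsc{3Sum} on $n$ reals to \textsc{3Sum} on $n$ integers from a universe of size $\mathrm{poly}(n)$, I would take the integer version as the starting point.

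The first reduction is a randomized, hashing-based self-reduction from integer \textsc{3Sum} to the convolution variant \textsc{Conv-3Sum} (given arrays $A,B,C$ of length $n$, decide whether $A[i]+B[j]=C[i+j]$ for some $i,j$). The engine is an \emph{almost-linear} hash function $h$ into a range of size $R$, for which $h(x+y)$ and $h(x)+h(y)$ differ (modulo $R$) by an additive term from a set of size $O(1)$, and which for a random choice creates few collisions in expectation. Hashing the $n$ input numbers by $h$ makes each candidate solution $a+b+c=0$ a coincidence \emph{local} to one triple of hash buckets, up to the bounded additive slack of $h$; sweeping over the $O(n/R)$ rounds needed to cover all elements and over the $O(1)$ slack values turns the task into a family of \textsc{Conv-3Sum}-style instances on arrays of length $\Theta(R)$, with the spurious solutions introduced by $h$ rare enough to be filtered within budget. (This reduction was initiated in \cite{Pa10} and later placed on fully rigorous footing.) With $R$ a suitable polynomial in $n$, a subquadratic algorithm for \textsc{Conv-3Sum} yields a subquadratic algorithm for \textsc{3Sum}.

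The second and central reduction encodes \textsc{Conv-3Sum} as triangle listing. Split the index set $[n]$ into $\sqrt n$ contiguous blocks of length $\sqrt n$. Because $i+j=k$ forces the block containing $k$ to be determined, up to two possibilities, by the blocks containing $i$ and $j$, one can build a tripartite graph whose vertex classes are derived from these blocks together with the array entries they carry, placing an edge between two vertices exactly when the data they represent are mutually consistent with some equation $A[i]+B[j]=C[i+j]$, so that a triangle materializes precisely when an honest \textsc{Conv-3Sum} solution sits in the corresponding block pattern. The construction is balanced so that the graph has $m=\Theta(n^{3/2})$ edges, hence $m^{4/3}=\Theta(n^2)$, while carrying only $O(m)$ triangles in total (inside a fixed triple of blocks, consistency pins down the remaining coordinate); adjusting constants, one may assume at most $m$ triangles, so that an algorithm reporting $m$ triangles in fact reports all of them.

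Putting the pieces together, run the hypothetical $O(m^{4/3-\eps})$ triangle-listing algorithm on this graph to obtain all of its (at most $m$) triangles in $O(m^{4/3-\eps})=O(n^{2-3\eps/2})$ time, test each reported triangle as a candidate \textsc{Conv-3Sum} solution, and thereby solve \textsc{Conv-3Sum}, and through the first reduction \textsc{3Sum}, in $O(n^{2-\eps'})$ expected time for some $\eps'>0$ --- a contradiction. I expect the main obstacle to be the first reduction: bounding the spurious coincidences created by the hash function and proving that the reduction succeeds with good probability, together with the bookkeeping that balances the hash range $R$ against the block size $\sqrt n$ so that every exponent in the chain lines up to beat $n^2$. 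By comparison, the graph construction itself is a fairly direct encoding of consistent block patterns, and the final composition is routine.
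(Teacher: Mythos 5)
This theorem is not proved in the paper at all: it is quoted verbatim from P\v{a}tra\c{s}cu \cite{Pa10} as background, so there is no in-paper proof to compare your attempt against. Judged on its own terms, your outline is a faithful reconstruction of the argument actually given in \cite{Pa10}: the almost-linear-hashing self-reduction from integer \textsc{3Sum} to the convolution variant, followed by the block-decomposition encoding of \textsc{Conv-3Sum} as triangle listing in a tripartite graph with $m=\Theta(n^{3/2})$ edges and $O(m)$ triangles, so that an $O(m^{4/3-\eps})$ listing routine would solve \textsc{3Sum} in $O(n^{2-\eps'})$ time. You also correctly identify where the real difficulty lies: P\v{a}tra\c{s}cu's original treatment of the hashing reduction was the delicate step, and it was subsequently made fully rigorous by Kopelowitz, Pettie, and Porat \cite{KPP16}, which this paper cites. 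What you have written is a proof plan rather than a proof --- the properties of the hash family, the bound on spurious collisions, and the verification that the constructed graph really has only $O(m)$ triangles are all asserted rather than established --- but the architecture and the exponent bookkeeping ($m^{4/3}=\Theta(n^2)$ at $m=\Theta(n^{3/2})$) are exactly right, and filling in the details would reproduce the known proof rather than a new one.
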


The problem of listing $m$ triangles examined by P\v{a}tra\c{s}cu is a bit artificial,
and one can argue that the ``real'' problem in this area is \textsc{Triangle Detection}. 
Indeed, P\v{a}tra\c{s}cu also remarked that it would be very interesting to extend
the $\Omega(m^{4/3-o(1)})$ lower bound of Theorem~\ref{thm:patrascu} to the computationally
easier task of triangle detection. While the questioned extension refers to the worst-case scenario
(\ie, graph), it is natural to investigate its limits across the entire range of~$t$. 
At one end of the range, it certainly does not apply to graphs with arboricity $O(1)$,
where the number of triangles $t$ is small, and triangle detection takes $O(m)$ time
(by the result of~\cite{CN85}, as mentioned earlier). 
Shifting to the other end of the range, we have dense graphs with a superlinear number of
triangles.  We prove the following algorithmic result for triangle detection
that relies on the input graph having a super-linear number of edges and triangles;
it shows that one cannot expect the conjecture to hold at this end either. 

\begin{theorem} \label{thm:detection}
Let $0 < \eps \leq 0.5$ and $0<\delta \leq 0.25$ be positive constants. 
Given a graph $G=(V,E)$ with $n$ vertices, $m =\Omega(n^{1+3.82\delta})$ edges,
and $t =\Omega(m^{1+\delta})$ triangles, there is a randomized algorithm that
finds  a triangle in $G$ with high probability.
  The expected running time of the algorithm is
\[ O\left( n^{\omega(1-\delta)} \right) \text{ or }
O \left( \left( m n^{-2\delta} \right)^{\frac{2\omega}{\omega+1}} \right). \]
\end{theorem}

The running time of our algorithm beats the one of the algorithm of Alon~\etal~for the class of graphs
described in Theorem~\ref{thm:detection} for every $\delta$. Indeed, we have
\[ m n^{-2\delta} = o(m), \text{ thus } 
\left( m n^{-2\delta} \right)^{\frac{2\omega}{\omega+1}} = o\left( m^{\frac{2\omega}{\omega+1}} \right). \]
Moreover, the running time of our algorithm beats the one in Theorem~\ref{thm:patrascu}
for certain values of $\delta$, even with the \emph{current} fastest matrix multiplication algorithm.
Indeed, for $\delta \geq 0.055$ we have $\left( m n^{-2\delta} \right)^{1.41} = o(m^{4/3-1/1000})$. 
The strong point of the new algorithm featured in Theorem~\ref{thm:detection} is its simplicity.

\section{Triangle detection in graphs with a few triangles} \label{sec:detection}

In this section we outline a randomized algorithm for triangle detection and thereby
prove Theorem~\ref{thm:detection}. Whereas it is very simple and natural, it does not appear
to have been considered and analyzed previously. 
The algorithm calls for triangle detection (via counting) in the graph induced by a small sample
of vertices, as given by the algorithm of Itai and Rodeh~\cite{IR78}
or by the algorithm of Alon, Yuster, and Zwick~\cite{AYZ97}.
Recall that the two algorithms run in $O(n^\omega)=O(n^{2.372})$ time and
$O(m^{2\omega/(\omega+1)}) =O(m^{1.41})$ time, respectively, on input graphs
with $n$ vertices and $m$ edges.
We refer to any of these algorithms as \textsc{Exact-Count}$(G)$. 

\smallskip
\begin{algorithm}[H]
  \DontPrintSemicolon
  \TitleOfAlgo{\textsc{Triangle-Detection}$(G)$}
  \KwIn{an undirected graph $G=(V,E)$ and $0<\delta \leq 0.25$}
  \KwOut{a triangle in $G$ (if found)}
  Extract a random vertex sample $U \subseteq V$ by retaining each vertex
  with probability $p=n^{-\delta}$\;
  Let $Z$ be the number of triangles in the induced graph $G[U]$ obtained
  by the algorithm \textsc{Exact-Count}\;
  If $Z>0$ return a triangle and halt
\end{algorithm}

\paragraph{Analysis.}
An analysis built on similar principles can be found in~\cite[Ch.~6.5]{MU17}; however, the specifics here
are quite different. Let $t$ denote the number of triangles in $G$. 
Let $X=|U|$ (the size of $U$). 
Let $Y$ and $Z$ be the random variables denoting the number of edges and triangles
that survive in $G[U]$.
By the linearity of expectation, we have
\begin{align}
  \E[X] &=n p = n^{1-\delta}, \label{eq:X} \\
  \E[Y] &= m p^2 = m n^{-2\delta}, \label{eq:Y} \\
  \E[Z] &= t p^3 = t n^{-3\delta}.
\end{align}
The third equality yields $ \E[Z n^{3\delta}] = n^{3\delta} \cdot \E[Z] = t$, as intended.
The expectations $\E[X]$, $\E[Y]$ and $\E[Z]$ are used for determining the expected running times
$\E\left[ X^\omega \right]$ and $\E\left[ Y^{\frac{2\omega}{\omega+1}} \right]$
and the probability of fail (\ie, no triangle is found).
Deviation inequalities are used throughout the proof.

We first estimate the running time. Let $\mu_X=\E[X]=n^{1-\delta}$.
By Chernoff bounds on the sum of Bernoulli trials~\cite[Thm.~4.4]{MU17}
and taking~\eqref{eq:X} into account, we have
\begin{equation}
  \Prob(X \geq 6\mu_X) = \Prob(X \geq 6n^{1-\delta}) \leq
  2^{-6n^{1-\delta}} \leq 2^{-n^{3/4}}. 
\end{equation}
This yields
\begin{align*}
  \E\left[ X^\omega \right] &\leq  \Prob(X \leq 6\mu_X) \cdot (6\mu_X)^\omega
  + \Prob(X \geq 6\mu_X) \cdot n^\omega \\
&\leq \left( 6n^{1-\delta} \right)^\omega + 2^{-n^{3/4}} n^\omega 
= O\left( n^{\omega(1-\delta)} \right).
\end{align*}

For $i=1,\ldots,m$, define the indicator random variables $Y_i$ by
\begin{align*}
Y_i &=
\left\{ \begin{array}{ll}
1 & \text{if edge } i \text{ appears in } G[U] \\
0 & \text{else}. \end{array} \right.
\end{align*}

We have $\E[Y_i]=p^2$. 
Then $Y= \sum_{i=1}^m Y_i$. Note that the variables $Y_i$ are (in general) not independent.
The variance of $Y$ is given by
\begin{equation} \label{eq:var-1}
\Var[Y] = \Var \left[\sum_{i=1}^m Y_i \right] = \sum_{i=1}^m \Var[Y_i]
+ 2 \sum_{i<j} \Cov(Y_i,Y_j).
\end{equation}

We have
\[ \Var[Y_i] = \E[Y_i] - (\E[Y_i])^2 \leq  \E[Y_i]=p^2, \]
and
\begin{align*}
  \Cov(Y_i,Y_j) &=\E[Y_i Y_j] - \E[Y_i] \E[Y_j] \\
  &=\left\{ \begin{array}{lll}
0 & \text{ if the two edges have no vertex in common}, \\
p^3 - p^4 & \text{ if the two edges have a vertex in common}.
\end{array} \right.
\end{align*}

The number of pairs of edges having exactly one common vertex is (see for instance~\cite{Ca98}):
$ \sum_{v \in V}^n {\deg(v) \choose 2} \leq mn$. 
This implies
\[ \sum_{i<j} \Cov(Y_i,Y_j) \leq mn p^3, \text{ thus }
\Var[Y] \leq mp^2 + 2mn p^3. \]

Let $\mu_Y=\E[Y]=m n^{-2\delta}$. 
Taking~\eqref{eq:Y} into account and applying Chebyshev's inequality \eqref{eq:C}
to $Y$ yields
\[ \Prob \left( \left| Y - \mu_Y\right| \geq \mu_Y \right)
  \leq \frac{\Var[Y]}{\mu_Y^2} \\
  \leq \frac{m p^2 + 2mn p^3}{m^2 p^4} =
  \frac{1 + 2np}{m p^2} = O\left( \frac{n^{1+\delta}}{m}\right) .
\]
Consequently we obtain 
\begin{align*}
  \E\left[ Y^{\frac{2\omega}{\omega+1}} \right] &\leq
  \Prob(Y \leq 2\mu_Y) \cdot (2\mu_Y)^{\frac{2\omega}{\omega+1}}
  + \Prob(Y \geq 2\mu_Y) \cdot m^{\frac{2\omega}{\omega+1}} \\
  &= O\left( \left(m n^{-2\delta} \right)^{\frac{2\omega}{\omega+1}} 
  +  \frac{n^{1+\delta}}{m} \cdot m^{\frac{2\omega}{\omega+1}} \right) 
= O \left( \left( m n^{-2\delta} \right)^{\frac{2\omega}{\omega+1}} \right),
\end{align*}
where in the last step we have used the inequality
$n^{1+ \left(1 + \frac{4\omega}{\omega+1}\right)\delta}
\leq n^{1+3.82\delta} =O(m)$.
It follows that the expected running time is 
\[ O\left( n^{\omega(1-\delta)} \right) \text{ or }
O \left( \left( m n^{-2\delta} \right)^{\frac{2\omega}{\omega+1}} \right). \]

We next analyze the probability of fail (\ie, no triangle is found).
Arbitrarily label the triangles in $G$ by $1,2,\ldots,t$. 
For $i=1,\ldots,t$, define the indicator random variables $Z_i$ by
\begin{align*}
Z_i &=
\left\{ \begin{array}{ll}
1 & \text{if triangle } i \text{ appears in } G[U] \\
0 & \text{else}. \end{array} \right.
\end{align*}

We have $\E[Z_i]=p^3$. 
Then $Z= \sum_{i=1}^t Z_i$. Note that the variables $Z_i$ are (in general) not independent.
The variance of $Z$ is given by
\begin{equation} \label{eq:var-2}
\Var[Z] = \Var \left[\sum_{i=1}^t Z_i \right] = \sum_{i=1}^t \Var[Z_i]
+ 2 \sum_{i<j} \Cov(Z_i,Z_j).
\end{equation}

We have
\[ \Var[Z_i] = \E[Z_i] - (\E[Z_i])^2 \leq  \E[Z_i]=p^3, \]
and
\begin{align*}
  \Cov(Z_i,Z_j) &=\E[Z_i Z_j] - \E[Z_i] \E[Z_j] \\
  &=\left\{ \begin{array}{lll}
0 & \text{ if the two triangles have no vertex in common}, \\
p^5 - p^6 & \text{ if the two triangles have a single vertex in common}, \\
p^4 - p^6 & \text{ if the two triangles have two vertices in common}.
\end{array} \right.
\end{align*}

The number of pairs of triangles having one or two vertices in common
are at most $3tm$ and $3tn$, respectively. 
By substituting these expressions into~\eqref{eq:var-2} we obtain
an upper bound on the variance of $Z$:
\begin{equation} \label{eq:var-3}
  \Var[Z] \leq t p^3 + 6tm p^5 + 6tn p^4.
\end{equation}

We assumed that $t=\Omega\left(m^{1+\delta}\right)$, thus by the second moment method,
see, \eg, \cite[Chap.~6.5]{MU17}, we have
\begin{align*}
 \Prob(Z=0) &\leq \frac{\Var[Z]}{(\E[Z])^2} \\
  &\leq \frac{(t p^3 + 6tm p^5 + 6tn p^4)}{t^2 p^6} \\
  &\leq \frac{6}{t} \left(n^{3\delta} + m n^{\delta} + n^{1+2\delta} \right)\\
  &= O\left(\frac{m n^{\delta} + n^{1+2\delta}}{\eps^2 t} \right) 
  = O\left(n^{-3.82 \delta^2}\right).
\end{align*}   

If a nonzero count is returned by the algorithm,
$G[U]$ contains $Z>0$ triangles, and
one of these triangles can be also identified within the same time.
This concludes the proof of Theorem~\ref{thm:detection}. 
\qed

\section{Data structures for independent set queries}
\label{sec:indep-set-queries}

Let $G=(V,E)$ be a graph with $n$ vertices and $m$ edges.
  We consider data structures for answering independent set queries of the form: Given a subset
  $U \subseteq V$, is $U$ independent?
  We present two simple implementations of a candidate data structure
  (if desired, they can be merged into one). 

\smallskip
The first one only uses the adjacency list data structure of $G$. 
The corresponding running time is $O(|U| +m)$.

\smallskip
\begin{algorithm}[H]
  \DontPrintSemicolon
  \TitleOfAlgo{\textsc{Independent-Set-Query-1}$(U)$}
  \KwIn{a subset $U \subseteq V$, where $G=(V,E)$}
  Mark and link in a list the entries of $V$ that appear in $U$ (remove the marks in the end before halting)\;
  \ForEach{edge $(u,v) \in E$}{
    Determine whether $u \in U$ and $v \in U$\;
    If $u \in U$ and $v \in U$, output that $U$ is not independent and halt\;
    Else continue\; }
    If no edge in $E$ is spanned by $U$, output that $U$ is independent and halt\;
\end{algorithm}

\smallskip
The second one uses the adjacency matrix $M$ of $G$. 
The corresponding running time is $O(|U|^2)= O(n^2)$.

\smallskip
\begin{algorithm}[H]
  \DontPrintSemicolon
  \TitleOfAlgo{\textsc{Independent-Set-Query-2}$(U)$}
  \KwIn{a subset $U \subseteq V$, where $G=(V,E)$}
  \ForEach{pair $(u,v) \in U^2$}{
If $M_{uv}=1$, output that $U$ is not independent and halt\;
    Else continue\; }
  If no pair in $U^2$ is an edge in $E$, output that $U$ is independent and halt\;
\end{algorithm}

Observe that \textsc{Independent-Set-Query-1} is efficient for sparse graphs
and large query sets, whereas \textsc{Independent-Set-Query-2}  is efficient
for small query sets. By choosing the best alternative in each case, one can answer queries
in $O\left(\min(m,  |U|^2 \right)$ time; note that the crossover is when $|U| =\Theta(m^{1/2})$.

\section{Triangles or independent sets}   \label{sec:t-or-is}

As the dual of \textsc{Clique}, the  \textsc{Independent-Set} problem is known to be
$\NP$-complete~\cite{GJ79}.
Any clique $C$ in $G$ is an independent set of the same size in $\overline{G}$,
the \emph{complement} of $G$ and vice versa. As such, any approximation 
algorithm for one of these problems can be converted to an approximation 
algorithm with the same approximation ratio for the other problem by simply
running it on the complement graph~\cite[Ch.~10.2]{WS11}.   
The \textsc{Clique} problem is also hard to approximate. First,
there is no constant approximation algorithm
for \textsc{Clique} unless $\P=\NP$~\cite[p.~421]{WS11}. Further, according to a result of
Zuckerman~\cite{Zu07}, for every positive constant $\eps>0$,  it is $\NP$-hard to approximate
\textsc{Clique} to within $n^{1-\eps}$. The same results hold for \textsc{Independent-Set}.
The best approximation algorithm known achieves an approximation ratio of
$\Omega\left(\log^3{n}/(n (\log \log {n})^2)\right)$~\cite{Fei04}.

According to a celebrated theorem of Tur\'{a}n~\cite{Tu41}, every graph of order $n$
and average degree $\delta$ contains an independent set of size at least $n/(\delta+1)$. 
As such, sparse graphs have large independent sets.  
A constructive proof of Tur\'{a}n's Theorem given by Erd\H{o}s yields a linear-time
greedy algorithm---included below---for finding an independent set of this size;
see for instance~\cite[p.~118]{Ho97}.
If $G$ is dense, \ie, it has $\Theta(n^2)$ edges, then $\delta = \Omega(n)$,
the independent set size guaranteed by the  $n/(\delta+1)$ bound is only $O(1)$ and
computing a constant-factor approximation is ruled out unless $\P=\NP$. 

\smallskip
\begin{algorithm}[H]
  \DontPrintSemicolon
  \TitleOfAlgo{\textsc{Independent-Set}$(G)$}
  \KwIn{an undirected graph $G=(V,E)$}
  Set $I \gets \emptyset$\;
  If $G$ is empty then stop; otherwise choose a vertex $v$
  of minimum degree in the current graph\;
  Add $v$ to $I$, delete $v$ and all its neighbors (along with all edges incident to at least
  one of these vertices) from $G$ and go to Step 2\;
\end{algorithm}
%

Shifting now to the triangle detection problem, it is easily solvable in polynomial time
by a brute force algorithm running in $O(n^3)$ time, or by the faster algorithms
in $O(n^\omega)$ time~\cite{IR78}
or in $O(m^{2\omega/(\omega+1)}) =O(m^{1.41})$ time~\cite{AYZ97}.
Due to its importance, the triangle detection problem along with its many variants
(listing all triangles, or listing only a prescribed number, or counting the triangles)
has received lots of attention starting in the 1980s and more recently in the perspective
of developing lower bounds for dynamic problems~\cite{AWY18,KPP16,Pa10}.

Consider the problem pair
$\langle \textsc{Independent-Set}, \textsc{Triangle Detection} \rangle$. 
Oddly enough, if one allows an algorithm the freedom to decide which problem to solve,
\ie, whether to find a large independent set \emph{or} a triangle, then there exists a very simple
and fast deterministic algorithm. The results are summarized in Theorem~\ref{thm:t-or-is}
and Corollary~\ref{cor:t-or-is}. Note that the most efficient algorithm for each of the two problems
runs in super-linear time. 

\begin{theorem} \label{thm:t-or-is}
  Given a graph $G=(V,E)$ with $n$ vertices and $m$ edges,
  an independent set of size $\lceil 2m/n \rceil$          
  or a triangle can be found in $O(m +n)$ time.
\end{theorem}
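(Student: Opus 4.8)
The statement says that in $O(m+n)$ time we can find either an independent set of size $\lceil 2m/n\rceil$ or a triangle. The natural approach is to run the greedy algorithm \textsc{Independent-Set}$(G)$ described just above — repeatedly pick a minimum-degree vertex $v$, add it to $I$, and delete $v$ together with $N(v)$ — but to instrument it so that whenever the minimum degree in the current graph is ``too large'' we instead look for a triangle. The key quantitative observation is the following. Suppose at some iteration the current graph $H$ has $n'$ vertices and the minimum degree is $d$. If $d\ge 2$, then any vertex $v$ of minimum degree has two neighbors $a,b$; if $ab\in E$ we have found a triangle, so otherwise every pair of neighbors of every low-degree vertex is a non-edge. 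More usefully: if at \emph{every} step throughout the run the minimum degree satisfies $d+1\le n/(2m/n)=n^2/(2m)$, then a standard accounting (each chosen $v$ ``pays for'' at most $d+1\le n^2/(2m)$ deleted vertices, over $n$ total vertices deleted) shows $|I|\ge n/(n^2/(2m))=2m/n$, giving the independent set. So the only way the greedy run fails to produce a set of size $\lceil 2m/n\rceil$ is if at some iteration the current minimum degree is at least $2m/n$ (roughly), i.e. the current graph is, locally, dense — and then we should be able to extract a triangle cheaply.

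First I would make the triangle-extraction step precise and linear-time. When the greedy reaches a graph $H$ on $n'$ vertices whose minimum degree $d$ exceeds the running threshold, I want a triangle inside $H$. One clean route: at that point $H$ has minimum degree $d$, so $|E(H)|\ge dn'/2$; if $d$ is large relative to $n'$ then $H$ is dense and contains a triangle by Turán/Mantel, but detecting one still seems to cost more than $O(m+n)$ in general. A better route is to interleave a cheap triangle test \emph{at the vertex being processed}: when we pick the minimum-degree vertex $v$ and its neighborhood $N(v)$ has size $d$, check whether $N(v)$ contains an edge. Doing this naively costs $\Theta(d^2)$, which is fine only if $d$ is small. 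The trick is to maintain counters so that we can decide quickly. Actually the simplest correct design: run the greedy; as long as the minimum degree stays below the threshold $\tau:=\lceil 2m/n\rceil-1$ (say), keep going — we are guaranteed to finish with $|I|\ge\lceil 2m/n\rceil$. The moment the minimum degree of the current graph is $\ge\tau$, \emph{stop the greedy} and switch to finding a triangle. At that moment every vertex of the current graph $H$ has degree $\ge\tau\ge 2m/n - 1$ in $H$, and $H$ has at most $n$ vertices, so $H$ has at least $\tfrac12\tau|V(H)|$ edges; more to the point, $\sum_{v\in V(H)}\deg_H(v)^2$ is large, and by a counting/averaging argument (count paths of length two, i.e. ``cherries''; compare with the number of edges) one shows the number of cherries exceeds the number of edges, forcing two cherries with the same endpoints — no, that gives a $C_4$. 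The right averaging is Mantel-type: I would argue directly that if a graph on $N$ vertices has minimum degree $\ge N/2$ then it has a triangle (pick edge $uv$, both have $\ge N/2$ neighbors among the remaining $N-2$, so their neighborhoods intersect); more generally if $\tau\ge$ (half the number of vertices currently present) we get a triangle. So I would pick the threshold/stopping rule so that when we stop, the current vertex count $N$ satisfies $N\le 2\tau$, i.e. we stop as soon as the current graph has at most $2\tau+1$ vertices \emph{or} we've built a big enough $I$ — whichever comes first.

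Let me restate the clean version I'd actually write. Run the greedy. Track the current number of vertices $N$. As long as $N>4m/n$, the minimum-degree vertex $v$ has $\deg(v)\le$ average degree $\le 2\cdot(\text{edges of }H)/N\le 2m/N<2m/(4m/n)=n/2$, so $v$ together with $N(v)$ removes at most $n/2+1$ vertices — wait, I need the per-step removal bounded by $n$ over the target size. Cleaner: while $N > 2m/n$, the average degree in $H$ is at most $2m/N < n$, hence the minimum degree is $\le n-1$... that's not tight enough either. The correct bound is the standard Turán accounting: in graph $H$ with $N$ vertices and at most $m$ edges, a min-degree vertex removes at most $1+2m/N\le 1+2m/N$ vertices, but actually we want: summing $(1+\deg)$ over chosen vertices equals $N_{\text{start}}$ and each term is at most $1+2\cdot(\text{current edges})/(\text{current }N)$; since current edges $\le m$ always and current $N\ge$ the value at the last step, a convexity/telescoping argument (exactly Erdős's proof of Turán) gives $|I|\ge\sum 1/(1+\deg_i)\ge N^2/(N+2m)\ge 2m/n$ when... hmm, $N^2/(N+2m)\ge 2m/n$ iff $nN^2\ge 2m(N+2m)$, true when $N\ge$ something like $2m/n\cdot(1+o(1))$ — no. I think the actual argument the authors use is: the greedy gives $|I|\ge \sum_v 1/(\deg(v)+1)\ge n/(\bar d+1)=n/(2m/n+1)=n^2/(2m+n)$, and $\lceil 2m/n\rceil\le n^2/(2m+n)$ fails in general. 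So the real mechanism must be the triangle escape.

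**Revised plan — two cases.** Set $k=\lceil 2m/n\rceil$. Run the modified greedy: at each step pick a min-degree vertex $v$ of the current graph $H$; if $v$ has two neighbors joined by an edge, \emph{output that triangle and stop}; otherwise add $v$ to $I$ and delete $v\cup N(v)$. Maintain a simple adjacency structure so that testing ``is there an edge inside $N(v)$'' while also deleting $N(v)$ costs $O(\deg(v)+(\text{edges incident to }N(v)))$, which amortizes to $O(m+n)$ total over the whole run (each edge is touched $O(1)$ times as its endpoints get deleted). If the greedy completes without ever finding a triangle, I claim $|I|\ge k$: indeed, because at each step the chosen $v$ was the minimum-degree vertex and had its neighborhood independent, $G$ is locally sparse along the run in the precise sense that lets Erdős's Turán argument go through — $|I|\ge n/(\bar d+1)$ with $\bar d=2m/n$ giving $|I|\ge n^2/(n+2m)\ge \lceil 2m/n\rceil$ provided... \textbf{this inequality is the step I expect to be the main obstacle}, and I suspect the paper handles it by a slightly different bookkeeping — perhaps bounding the number of \emph{deleted} vertices rather than using the $1/(\deg+1)$ sum, or by observing that when $G$ is dense the min-degree-vertex neighborhoods can't stay independent for long so a triangle is forced early. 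I would therefore spend the bulk of the write-up nailing the accounting that simultaneously (a) bounds the total work by $O(m+n)$ and (b) guarantees $|I|\ge\lceil 2m/n\rceil$ in the triangle-free branch, most likely via the identity $\sum_i(\deg_i+1)=n$ (vertices are partitioned into the closed neighborhoods of the chosen vertices — note these are disjoint!) so $|I|=$ number of parts, and each part has size $\deg_i+1$; to make the number of parts at least $2m/n$ I need each $\deg_i+1\le n/(2m/n)=n^2/(2m)$, i.e. each chosen vertex has degree $<n^2/(2m)$ in its current graph. That is NOT automatic, so the triangle escape must trigger whenever a min-degree vertex has degree $\ge n^2/(2m)$: if the \emph{minimum} degree of $H$ is $\ge n^2/(2m)$ while $|V(H)|=N$, then since also $|E(H)|\le m$ we get $n^2/(2m)\le \bar d_H \le 2m/N$, forcing $N\le 4m^2/n^2\le$ small — in fact $N\le n^2/(2m)\cdot 2 = n^2/m$ roughly, and then $H$ has min degree $\ge n^2/(2m)\ge N/2$, so by the edge-neighborhood-intersection argument $H$ contains a triangle, which the greedy's local check finds within $O(|E(H)|)=O(m)$ time. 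That closes the loop: either every chosen vertex has small degree (giving $|I|\ge 2m/n$) or some step has a high-min-degree graph (giving a triangle), and the running time is $O(m+n)$ throughout.
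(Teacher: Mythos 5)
Your final plan has a genuine gap at its crucial step. You reduce everything to the claim that whenever the current graph $H$ (with $N$ vertices, minimum degree $\delta \geq n^2/(2m)$, and at most $m$ edges) triggers the ``escape,'' it must contain a triangle, and you justify this via the chain $\delta \geq n^2/(2m) \geq N/2$ followed by the neighborhood-intersection argument. That chain does not hold. From $N\delta/2 \leq |E(H)| \leq m$ you only get $N \leq 4m^2/n^2$, whereas $n^2/(2m) \geq N/2$ requires $N \leq n^2/m$; the former implies the latter only when $m \lesssim n^{4/3}$. A concrete counterexample to the whole pipeline is $G = K_{n/2,n/2}$: here $m = n^2/4$, the threshold $n^2/(2m)$ equals $2$, and the minimum degree is $n/2$, so your escape fires at the very first step --- yet $G$ is triangle-free, and the algorithm is obliged to output an independent set of size $\lceil 2m/n\rceil = n/2$ (one side of the bipartition). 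Even in regimes where the degree bound is in the right ballpark, minimum degree exactly $N/2$ does not force a triangle ($K_{N/2,N/2}$ again), so you would need a strict inequality you cannot guarantee. Your own write-up flags this step as ``the main obstacle,'' and indeed it is where the argument fails; the amortized $O(m+n)$ bookkeeping for the greedy-with-escape is also asserted rather than proved, but that is secondary.

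The paper's proof avoids the greedy entirely and is much shorter: take a vertex $v$ of \emph{maximum} degree, so $\deg(v) \geq \lceil 2m/n\rceil$ since the maximum degree is at least the average degree $2m/n$. Retain any $U \subseteq \Gamma(v)$ with $|U| = \lceil 2m/n\rceil$ and test all pairs of $U$ for adjacency. Either $U$ is independent (done), or some $x,y \in U$ are adjacent and $\langle v,x,y\rangle$ is a triangle (done). The test costs $O(|U|^2) = O(m^2/n^2) = O(m)$ because $m = O(n^2)$. Your instinct that ``dense forces a triangle'' is exactly what fails in bipartite-like graphs; the fix is to localize the dichotomy to a single high-degree neighborhood, where a non-edge-free $U$ is itself the desired independent set rather than a certificate of density.
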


The size of the independent set found by the algorithm 
in Theorem~\ref{thm:t-or-is} grows inversely proportionally with the size of
the independent set found by the greedy algorithm. 
In particular,
\begin{itemize} \itemsep 1pt
\item [-] If $m=\Theta(n^2)$, an independent set of size $\Omega(n)$ or a triangle can be found
  in $O(m +n)$ time. If an independent set is returned, it is notably a constant-factor
  approximation; indeed, $\opt \leq n$.
  In contrast, the greedy algorithm is guaranteed only a set of size $O(1)$!
\item [-] If $m=\Theta(d n)$, an independent set of size $\Omega(d)$ or a triangle can be found
  in $O(m +n)$ time. In contrast, the greedy algorithm finds a set of size $\Omega(n/d)$. 
\end{itemize}

\begin{proof} (of Theorem~\ref{thm:t-or-is}). 
Let $v$ be a vertex of maximum degree in $G$ and let $\Gamma(v)$ be its neighborhood. 
Since the average degree in $G$ is $2m/n$, we have $|\Gamma(v)| \geq \lceil 2m/n \rceil$.
Arbitrarily retain a subset $U \subseteq \Gamma(v)$ of this size: $|U|= \lceil 2m/n \rceil$.
If $U$ is an independent set we are done; in the other case we are also done since
a triangle incident to $v$ has been found: if, say, $x,y \in U$ and $xy \in E$, then
$\langle v,x,y \rangle$ is a triangle in $G$.
Using either implementation of the data structure in Section~\ref{sec:indep-set-queries},
the independent set test takes $O(m+n)$ time. Indeed,
\begin{equation} \label{eq:fast1}
  O\left(|U|^2 + m +n\right) = O\left(m^2/n^2 + m +n\right) = O(m +n),
\end{equation}
and
\begin{equation} \label{eq:fast2}
  O\left(|U| + m +n\right) = O(m +n),
\end{equation}
as required.
\end{proof}

\begin{corollary} \label{cor:t-or-is}
Given a graph $G=(V,E)$, one of the following tasks can be performed in  $O(m+n)$ time:
(i)~compute an $\Omega(1/\sqrt{n})$-approximation of a maximum independent set in $G$,
or (ii)~find a triangle in $G$.
\end{corollary}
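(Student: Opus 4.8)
The plan is to derive Corollary~\ref{cor:t-or-is} directly from Theorem~\ref{thm:t-or-is} by a simple case analysis on the edge density, combining the two complementary guarantees already available: the independent set of size $\lceil 2m/n \rceil$ produced in $O(m+n)$ time by Theorem~\ref{thm:t-or-is} (which is large when the graph is dense), and the independent set of size $\geq n/(2m/n+1) = \Omega(n^2/m)$ produced in $O(m+n)$ time by the greedy algorithm \textsc{Independent-Set}$(G)$ via Tur\'{a}n's theorem (which is large when the graph is sparse). The point is that the product of these two sizes is $\Omega(n)$, so running \emph{both} algorithms and returning the larger output yields a set of size $\Omega(\sqrt{n})$ whenever a triangle is not found.

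First I would handle the trivial reductions: if $m \leq n$ (or the graph is empty), the greedy algorithm alone already returns an independent set of size $\Omega(n) = \Omega(\sqrt{n} \cdot \opt/\sqrt{n})$, so assume $m > n$. Then I run Theorem~\ref{thm:t-or-is}'s algorithm; if it reports a triangle we are in case (ii) and done. Otherwise it returns an independent set $I_1$ with $|I_1| \geq 2m/n$. Separately I run \textsc{Independent-Set}$(G)$ to get $I_2$ with $|I_2| \geq n/(2m/n+1) \geq n^2/(3m)$ (using $m > n$ so $2m/n + 1 \leq 3m/n$). Returning $I = \arg\max(|I_1|,|I_2|)$, we have $|I| \geq \max(2m/n,\; n^2/(3m)) \geq \sqrt{(2m/n)(n^2/(3m))} = \sqrt{2n/3} = \Omega(\sqrt{n})$. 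Since $\opt \leq n$ trivially, $|I| = \Omega(\sqrt{n}) = \Omega(\opt/\sqrt{n})$, i.e.\ $I$ is a $\Omega(1/\sqrt{n})$-approximation. Both subroutines run in $O(m+n)$ time, so the total is $O(m+n)$.

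For the dense-graph strengthening (mentioned in the abstract and in result~(v), though not part of the Corollary statement proper), one observes that when $m = \Theta(n^2)$ the size $2m/n = \Theta(n)$ already gives a constant-factor approximation since $\opt \leq n$, so in that regime Theorem~\ref{thm:t-or-is}'s output alone suffices and the greedy call is not even needed.

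I do not anticipate a genuine obstacle here — the entire argument is an AM--GM-style balancing of two already-established linear-time guarantees, and the only care needed is in the bookkeeping constants (e.g.\ bounding $2m/n+1$ by $3m/n$ under the assumption $m>n$, and checking the boundary cases $m \leq n$ and $G$ empty). If anything, the mild subtlety is making sure the $\Omega(\sqrt{n})$ bound is stated as an approximation ratio against $\opt$ rather than against $n$; but since $\opt \leq n$ is immediate, this is automatic.
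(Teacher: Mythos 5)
Your proof is correct and is essentially the paper's argument: both rest on balancing the greedy (Tur\'{a}n) guarantee of $\Omega(n^2/m)$ against the $\Omega(m/n)$ guarantee from the neighborhood of a maximum-degree vertex, which meet at $\Omega(\sqrt{n})$ when the average degree crosses $\sqrt{n}$. The only cosmetic difference is that you run both subroutines and take the larger output (via an AM--GM inequality), while the paper case-splits on whether the average degree is above or below $\sqrt{n}$ and runs only the relevant one.
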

\begin{proof}
Let $G$ be a graph of order $n$. Since a largest independent set has size at most $n$,
finding an independent set of size $\Omega(\sqrt{n})$ achieves the first objective.
If the maximum degree is at most $\sqrt{n}$ the greedy algorithm \textsc{Independent-Set}$(G)$
yields an independent set of size about $\sqrt{n}$ thereby achieving the first objective.
If the maximum degree is at least $\sqrt{n}$, let $v$ be a vertex of maximum degree.
Arbitrarily retain a subset $U \subset \Gamma(v)$ of this size: $|U|= \lceil \sqrt{n} \rceil$.
If $U$ is an independent set the first objective has been achieved.
In the other case a triangle incident to $v$ has been found and the second objective has
been achieved. 
\end{proof}

\paragraph{Remarks.}
\textsc{Independent-Set} remains $\NP$-complete for triangle-free graphs~\cite{Pol74};
see also~\cite[p.~194--195]{GJ79}.
Let $G$ be a triangle-free graph of order $n$.
Repeatedly removing independent sets of size about $\sqrt{n}$ yields an efficient algorithm
for coloring $G$ with $O(\sqrt{n})$ colors. Consequently, the chromatic number is
$\chi(G)= O(\sqrt{n})$. A slightly larger independent set is implied from results of Ajtai, 
Koml\'{o}s and Szemer\'{e}di~\cite{AKS80}, who showed that if $G$ is a triangle-free
graph of order $n$, then $G$ contains an independent set of size $\Omega(\sqrt{n \log{n}})$.
Consequently, $\chi(G)= O(\sqrt{n/\log{n}})$. Apart from the constant factor, this
bound is the best possible (by a celebrated result of Kim~\cite{Kim95}). 
See also~\cite{GT00} for a perspective on these results.

\section{Triangles or colorings}   \label{sec:t-or-chi}

A \emph{coloring} of an undirected graph $G=(V,E)$ is a partition of vertices into color classes so that
no edge joins two vertices in the same class. The  \textsc{Graph Coloring} problem is that of computing
such a partition using as few colors as possible. This minimum number is the \emph{chromatic number}
$\chi(G)$ of $G$. The problem is known to be $\NP$-hard~\cite{GJ79},
and the best approximation ratio known is $O\left(n (\log \log{n})^2/\log^3{n} \right)$~\cite{Ha93}.
It is conjectured~\cite{Ha93} that the best possible approximation guarantee for \textsc{Graph Coloring} is
$O\left(n/\log^c{n}\right)$, for some constant $c \geq 3$. 

The argument in the proof of Theorem~\ref{thm:t-or-is} is similar to 
arguments used in coloring triangle-free graphs. See for instance, \cite[Prop.~8.1.18]{We21}.
Let $G$ be a triangle-free graph of order $n$.
Repeatedly removing independent sets of size about $\sqrt{n}$ yields an efficient algorithm
for coloring $G$ with $O(\sqrt{n})$ colors. Consequently, the chromatic number is
$\chi(G)= O(\sqrt{n})$. We adapt this algorithm to obtain a good approximation for \textsc{Graph Coloring}
or to find a triangle in $G$ efficiently. 

\begin{theorem} \label{thm:t-or-chi}
  Given a graph $G=(V,E)$ with $n$ vertices and $m$ edges,
 a $\sqrt{n}$-approximation for \textsc{Graph Coloring} of $G$ or a triangle in $G$
  can be found in $O(m+n^{3/2})$ time.
\end{theorem}
\begin{proof}
  As long as $G$ has a vertex $v$ with at least $\lfloor \sqrt{n} \rfloor$ neighbors not yet colored,
  arbitrarily retain a subset $U \subseteq \Gamma(v)$ of uncolored vertices of this size:
  $|U|= \lfloor \sqrt{n} \rfloor$. If $U$ is an independent set, use one new color on these vertices
  and repeat. Otherwise, a triangle in $G$ has been found, and the algorithm halts. 
  Since $G$ has $n$ vertices, this first phase uses at most $\sqrt{n}$ colors.
  Afterwards, the subgraph $G'$ induced by the remaining vertices has maximum degree less than
  $\lfloor \sqrt{n} \rfloor$. In the second phase, arbitrarily order the remaining vertices and use
  the greedy algorithm, see, \eg, \cite[p.~147]{Bol98}, to color $G'$ with at most  $\sqrt{n}$
  additional colors. The total number of colors used is at most $2 \sqrt{n}$.

 Since the independent set test in one iteration takes $O(n)$ time, and there are   
 at most $\sqrt{n}$ iterations,  Phase I takes $O(m+n^{3/2})$ time.
 Phase II takes $O(m+n)$ time. 
 Consequently, the run-time of the algorithm is $O(m+n^{3/2})$.

It is easy to see that the coloring algorithm has ratio at most $\sqrt{n}$ on every nonempty graph $G$. 
Indeed, $\chi(G) \geq 2$, and $2 \sqrt{n}/2 = \sqrt{n}$. 
\end{proof}

Assume that $m = \omega(n^{9/8})$.
Note that $m+n^{3/2} = o(n^w)$ unless $\omega=2$, but our algorithm is combinatorial and much simpler
than any algorithm for triangle detection based on matrix multiplication. Moreover, presently it is only known
that $\omega<2.372$. 
Note also that $m+n^{3/2} = o(m^{2\omega/(\omega+1)}) =o(m^{1.41})$, regardless of the value of $\omega$:
indeed, $n^{3/2} \ll m^{2\omega/(\omega+1)}$ if $m \gg n^{\frac{3(\omega+1)}{4 \omega}}$, which holds by the
assumption.
For the current state of the art in matrix multiplication, we have $\omega<2.372$, thus our algorithm is faster
already for $m = \Omega(n^{1.066})$.

\section{Dual pair approximations}   \label{sec:dual}

Theorem~\ref{thm:t-or-is} suggests the following broader research direction:
If it is difficult to find (A) or (B) separately, can one find one of the two efficiently?

In the context of dealing with hard problems, Vassilevska~\etal~\cite{VWW06} 
proposed a hybrid method, \ie, the use of \emph{hybrid algorithms}.
Specifically, the authors demonstrated $\NP$-hard problems that admit a hybrid algorithm 
where a given instance can either be solved exactly in subexponential time, or be
approximated in polynomial time but with an approximation ratio that is better
than the inapproximability threshold of the problem, assuming $\P \neq \NP$. 
This question is somewhat analogous to some recent approaches in
fine grain complexity studies, \eg, \cite{AWY18},
where one would like to obtain conditional lower bounds that rely on the hypothesized
hardness of \emph{at least one} of several problems. 

Informally, a pair of computational problems $\langle A,B \rangle$ with the same type of inputs
(\eg, graphs), is called a \emph{dual pair} if neither $A$, nor $B$, admits an efficient algorithm
but there is an efficient algorithm to solve either $A$ or $B$ on the \emph{same} instance.

Let \textsc{$\rho$-Indep.-Set} denote the problem of computing a $\rho$-approximation
of a maximum independent set in $G$, where $G=(V,E)$, $|V|=n$, $|E|=m$. 
We showed that \linebreak
$\langle \textsc{$\Omega(1/\sqrt{n})$-Indep.-Set}, \textsc{Triangle Detection} \rangle$ 
is a dual pair (Cor.~\ref{cor:t-or-is}).
Further, Let \textsc{$\rho$-Graph Coloring} denote the problem of computing a $\rho$-approximation
for \textsc{Graph Coloring}, where $G=(V,E)$, $|V|=n$, $|E|=m$. 
We showed that
$\langle \textsc{$\sqrt{n})$-Graph Coloring}, \textsc{Triangle Detection} \rangle$ 
is a dual pair (Thm.~\ref{thm:t-or-chi}) in dense graphs, with $m =\omega(n^{9/8})$. 

Several other results in this direction, due to Eppstein~\cite{Epp10},
can be refined as follows.
For an undirected graph $G=(V,E)$, one can obtain a \textsc{(1,2)-TSP} instance
in a canonical way by using unit weights for every edge in $E$ and weight $2$ for every non-edge
(\ie, element of $\overline{E}$).
\textsc{(1,2)-TSP} was first studied in~\cite{PY93}. 
The best known polynomial approximation for \textsc{(1,2)-TSP} has ratio $8/7$~\cite{BK06,AMP18}. 
Let \textsc{$\rho$-(1,2)-TSP} denote the problem of computing a $\rho$-approximation
for \textsc{(1,2)-TSP} on the weighted graph defined from $G$ as above.

Specifically, Eppstein~\cite[Thm.~2.2]{Epp10} has shown (via DFS) that 
$ \langle \textsc {$(1+\eps)$-(1,2)-TSP}, \textsc{$\eps$-Indep.-Set} \rangle $ 
is a dual pair.
In principle, $\eps$ can be arbitrary small, which is undesirable,
  however, setting $\eps=1/535$ yields an approximation beyond the current
  inapproximability ratio for each of the two problems. Indeed, Karpinski and Schmied~\cite{KS12,KLS15}
  proved that \textsc{(1,2)-TSP} is $\NP$-hard to approximate with a factor less than $535/534$.
  Consequently, this result can be strengthened as follows.

\begin{theorem} \label{thm:eppstein1}
  For an undirected graph $G=(V,E)$, and its corresponding  \textsc{(1,2)-TSP} instance,
  a $1/535$ approximation for \textsc{Independent-Set} or a  $536/535$ approximation for
   \textsc{(1,2)-TSP} can be obtained in linear time.
\end{theorem} 

  Similarly, for a directed graph $G=(V,E)$, one can obtain a \textsc{(1,2)-ATSP} (asymmetric \textsc{TSP})
  instance in the same way.
Let \textsc{$\rho$-(1,2)-ATSP} denote the problem of computing a $\rho$-approximation
for \textsc{(1,2)-ATSP} on the weighted directed graph defined from $G$ as above.
For a directed graph $G=(V,E)$,  \textsc{MAX-ACY-IND-SG} is
the problem of finding the largest cardinality of a subset of vertices whose induced subgraph is acyclic.
It is known~\cite{LY93} that this problem  is $\NP$-hard to approximate with a factor of $2^{\log^c n}$,
for any constant $0<c<1/2$.

 Analogous to Tur\'{a}n's result mentioned in Section~\ref{sec:t-or-is},
  every directed graph of order $n$ and average out-degree $\delta^+$ contains an induced subgraph
  of size at least $n/(\delta^++1)$, see~\cite{BD06}.
As such, sparse directed graphs have large induced acyclic subgraphs.  
A constructive proof follows along the lines of~\cite{BD06}. 

For directed graphs, Eppstein~\cite[Sec.~4]{Epp10} has shown (again, via DFS) that \linebreak
$ \langle \textsc {$(1+\eps)$-(1,2)-ATSP}, \textsc{$\eps$-{MAX-ACY-IND-SG}} \rangle $ 
is a dual pair.
Again, in principle $\eps$ can be arbitrary small, 
  however, setting $\eps=1/207$ yields an approximation beyond the current
  inapproximability ratio for each of the two problems. Indeed, Karpinski and Schmied~\cite{KS12,KLS15}
  proved that \textsc{(1,2)-ATSP} is $\NP$-hard to approximate with a factor less than $207/206$.
  Consequently, this result can be strengthened as follows.

\begin{theorem} \label{thm:eppstein2}
  For a directed graph $G=(V,E)$, and its corresponding  \textsc{(1,2)-ATSP} instance,
a $1/207$ approximation for \textsc{MAX-ACY-IND-SG} or a $208/207$ approximation for
   \textsc{(1,2)-ATSP} can be obtained in linear time.
\end{theorem}


\begin{thebibliography}{99}

\bibitem{AWY18} 
A. Abboud, V. V. Williams, and H. Yu,
Matching triangles and basing hardness on an extremely popular conjecture,
\emph{SIAM J. Comput.}
\textbf{47(3)} (2018), 1098--1122.
`
\bibitem{AMP18}
A. Adamaszek, M. Mnich, and K. Paluch, 
New approximation algorithms for $(1, 2)$-TSP,
\emph{Proc. 45th International Colloquium on Automata, Languages, and Programming} (ICALP),
Schloss Dagstuhl-Leibniz-Zentrum fuer Informatik, 2018.

\bibitem{AKS80} 
M. Ajtai, J. Koml\'{o}s, and E. Szemer\'{e}di,
A note on Ramsey numbers,
\emph{Journal of Combinatorial Theory Ser. A}
\textbf{29} (1980), 354--360.

\bibitem{AYZ97} 
N. Alon, R. Yuster, and U. Zwick,
Finding and counting given length cycles,
\emph{Algorithmica}
\textbf{17(3)} (1997), 209--223.

\bibitem{BDP08} 
I. Baran, E. D. Demaine, and M. P\v{a}tra\c{s}cu,
Subquadratic algorithms for 3SUM,
\emph{Algorithmica}
\textbf{50(4)} (2008), 584--596.

\bibitem{BE82} 
R. Bar-Yehuda and S. Even,
On approximating a vertex cover for planar graphs,
\emph{Proc. 14th ACM Sympos. Theory of Comput.} (STOC),
1982, pp.~303--309.

\bibitem{BD06} 
S. Bereg and A. Dumitrescu,
The lifting model for reconfiguration,
\emph{Discrete \& Computational Geometry}
\textbf{35(4)} (2006), 653--669.

\bibitem{BK06}
P. Berman, and M. Karpinski,
$8/7$-approximation algorithm for $(1, 2)$-TSP,
\emph{Proc. 17th ACM-SIAM Sympos. Discrete Algorithms} (SODA), 
2006, pp.~641--648.

\bibitem{Bol98}
  B. Bollob\'as,
  \emph{Modern Graph Theory},
Springer-Verlag, New York, 1998. 

\bibitem{Ca98} 
D. de Caen,
An upper bound on the sum of squares of degrees in a graph,
\emph{Discrete Mathematics}
\textbf{185(1-3)} (1998), 245--248.

\bibitem{CN85} 
N. Chiba and T. Nishizeki,
Arboricity and subgraph listing algorithms,
\emph{SIAM Journal on Computing}
\textbf{14(1)} (1985), 210--223.

\later{
\bibitem{DL23} 
A. Dumitrescu and A. Lingas,
Finding small complete subgraphs efficiently,
\emph{Proc. 34th Internat. Workshop on Combin. Algorithms}
(IWOCA), June 2023, 
to appear.
} 

\bibitem{Epp10}
D. Eppstein,
Paired approximation problems and incompatible inapproximabilities,
\emph{Proc. 21 ACM-SIAM Symposium on Discrete Algorithms} (SODA),
2010, pp.~1076--1086.

\bibitem{Fei04}
U. Feige,
Approximating maximum clique by removing subgraphs,
\emph{SIAM Journal on Discrete Mathematics}
\textbf{18(2)} (2004), 219--225.

\bibitem{GO12}
A. Gajentaan and M. H. Overmars,
On a class of $O(n^2)$ problems in computational geometry,
\emph{Comput. Geom.}
\textbf{45(4)} (2012), 140--152.

\bibitem{GJ79} M. R. Garey and D. S. Johnson,
\emph{Computers and Intractability: A Guide to the Theory of NP-Completeness},
W.H. Freeman and Co., New York, 1979.

\bibitem{GT00}
J. Gimbel and C. Thomassen,
Coloring triangle-free graphs with fixed size,
\emph{Discrete Mathematics}
\textbf{219(1-3)} (2000), 275--277.

\bibitem{GS17}
O. Gold and M. Sharir,
Improved bounds for 3SUM, $k$-SUM, and linear degeneracy,
\emph{Proc. 25th European Symposium on Algorithms} (ESA),
LIPIcs, volume 87, pp.~42:1--42:13.
Schloss Dagstuhl - Leibniz-Zentrum f{\"{u}}r Informatik.

\bibitem{GP18}
A. Gr{\o}nlund and S. Pettie,
Threesomes, degenerates, and love triangles,
\emph{Journal of ACM}
\textbf{65(4)} (2018), 22:1--22:25.

\bibitem{Ha93}
M. Halld{\'{o}}rsson,
A still better performance guarantee for approximate graph coloring,
\emph{Inf. Process. Lett.}
\textbf{45(1)} (1993), 19--23.

\bibitem{Ho97}
D. Hochbaum,
Approximating covering and packing problems: set cover, vertex cover,
independent set, and related problems,
in \emph{Approximation Algorithms for NP-Hard Problems}
(D. S. Hochbaum, ed.), 1997, PWS Publ. Co., pp.~94--143.

\bibitem{IR78}
A. Itai and M. Rodeh,
Finding a minimum circuit in a graph,
\emph{SIAM Journal on Computing}
\textbf{7(4)} (1978), 413--423.

\bibitem{KS12}
M. Karpinski and R. Schmied,
On approximation lower bounds for TSP with bounded metrics,
\emph{Electr. Colloquium Computat. Complexity}, 19:8, 2012.

\bibitem{KLS15}
M. Karpinski, M. Lampis, and R. Schmied,
New inapproximability bounds for TSP,
\emph{Journal of Computer and System Sciences}
\textbf{81(8)} (2015), 1665--1677.

\bibitem{Kim95}
J. H. Kim,
The Ramsey number $R(3,t)$ has order of magnitude $t^2/\log{t}$,
\emph{Random Structures and Algorithms}
\textbf{7} (1995), 173--207.

\bibitem{KKM00}
T. Kloks, D. Kratsch, and H. M{\"{u}}ller,
Finding and counting small induced subgraphs efficiently,
\emph{Information Processing Letters}
\textbf{74(3-4)} (2000), 115--121.

\bibitem{KPP16}
T. Kopelowitz, S. Pettie, and E. Porat,
Higher lower bounds from the 3SUM conjecture,
\emph{Proc. 27th {ACM-SIAM} Sympos. Discrete Algorithms} (SODA),
2016, pp.~1272--1287.

\bibitem{LY93} 
C. Lund and M. Yannakakis,
The approximation of maximum subgraph problems,
\emph{Proc. 20th International Colloquium on Automata, Languages, and Programming} (ICALP), 
vol. 700 of LNCS, 1993, pp.~40--51.

\bibitem{Mat10}
J. Matou\v{s}ek,
\emph{Thirty-three Miniatures},
American Mathematical Society, 2010.

\bibitem{MU17}
  M.~Mitzenmacher and E.~Upfal,
\emph{Probability and Computing: Randomized Algorithms and Probabilistic Analysis},
2nd~edition, Cambridge University Press, 2017.

\bibitem{PY93} 
C. Papadimitriou and M. Yannakakis,
The traveling salesman problem with distances one and two,
\emph{Mathematics of Operations Research}
\textbf{18(1)} (1993), 1--11.

\bibitem{Pa10} 
M. P\v{a}tra\c{s}cu,
Towards polynomial lower bounds for dynamic problems,
\emph{Proc. 42nd {ACM} Sympos. Theory of Computing} (STOC),
2010, pp.~603--610.

\bibitem{Pol74} 
S. Poljak,
A note on stable sets and colorings of  graphs,
\emph{Commentationes Mathematicae Universitatis Carolinae}
\textbf{15(2)} (1974), 307--309.

\bibitem{Tu41}
  P. Tur{\'a}n,
  On an external problem in graph theory,
  \emph{Mat. Fiz. Lapok}
\textbf{48} (1941), 436--452.

\bibitem{VWW06} 
V. Vassilevska, R. Williams, and S. L. M. Woo,
Confronting hardness using a hybrid approach,
\emph{Proc. 17th ACM-SIAM Sympos. Discrete Algorithms} (SODA),
2006, pp. 1--10.

\bibitem {We21} 
D. West,
\emph{Combinatorial Mathematics},
Cambridge University Press, 2021.

\bibitem {WS11}
  D. P. Williamson and D. B. Shmoys,
\emph{The Design of Approximation Algorithms},
Cambridge University Press, 2011.

\old{
\bibitem {Woe08}
G. J. Woeginger,
Open problems around exact algorithms,
\emph{Discret. Appl. Math.}
\textbf{156(3)} (2008), 397--405.
} 

\bibitem{Zu07}
D. Zuckerman,
Linear degree extractors and the inapproximability of Max Clique and Chromatic Number,
\emph{Theory of Computing}
\textbf{3(1)} (2007), 103--128.

\end{thebibliography}
\end{document}